\date{}
\author{Yong Tan\\~\\
\emph{yongtan\_navigation@outlook.com}}
\title{Compound Binary Search Tree and Algorithms}
\theoremstyle{plain}
\newtheorem{lemma}{Lemma}
\begin{document}
\maketitle 
\begin{abstract}
The Binary Search Tree (BST) is average in computer science which supports a compact data structure in memory and oneself even conducts a row of quick algorithms, by which people often apply it in dynamical circumstance. Besides these edges, it is also with weakness on its own structure specially with poor performance at worst case\cite{5}. In this paper, we will develop this data structure into a synthesis to show a series of novel features residing in. Of that, there are new methods invented for raising the performance and efficiency nevertheless some existing ones in logarithm or linear time.
\end{abstract}\\
~\\
Keywords: \emph{binary search tree; algorithm}

\section{Introduction}
Binary Search Tree (BST) is a common data structure broad elaborated in many literatures and textbooks as that regular. At first, the construct on it can be referred to a binary tree in which besides each unit incident a \emph{key} (or value), each even carries three \emph{link}s mutually to comprise a compact structure, whose pointers respectively point to its own members in family the \emph{parent} and two \emph{children} that resides in the \emph{left} side and the \emph{right} side but maybe in\emph{null} for link in \emph{open}\cite{1,2,5}; especially the one without parent as \emph{root} or \emph{ancestor} to all others.

Inside a BST, analogous constitution may occur by generating roots and trees recursively\textemdash each can as minor \emph{root} on which new twigs can bloom from although there has existed a chief root to all items in tree. 

If refer to maintain or build a tree, which must comply a Protocol of constitution; say the least, all incident keys must obey the clause that each at left link or right link which in charge by its \emph{parent} should smaller than or larger than \emph{parent}'s. Consider within a more large rank, of two subtrees the left one or the right one and their common root, correspondently at the left or the right, each key of \emph{descendant} in tree is smaller or larger than ancestor’s.

Hence that law strongly conducts the operation of adding a fresh item into a BST, upon that, building a tree is actually accounted as a row of item insertions. Meanwhile, the single insertion can be outlined as a course of comparison as a path\cite{5}, called \emph{depth} by us. In theory, with the longest one among them, we can use to measure the shape of a BST. 

We can define a \emph{proper} tree with $\log{n}$ depth by such a \emph{bi}furcated structure above-mentioned on each item as a standard pattern, where variable $n$ is the number of items inside tree. For an accessing in a tree to achieve an operation, complexity can be estimated for lower bound in $\Theta(\log{n})$ or for upper bound in $O(n)$\cite{5}, clearly, both are decided by the shape of tree, frankly speaking, by a temporal series of insertions in building period.

Thus the flavor of cognition becomes interesting when we study the shape of BST: the future shape of BST has actually been destined by the permutation of that ready sequence in advance; in contrast, we either have not got any way to carry an arbitrary permutation suitable to guarantee the proper shape of building.

Worst still, that is a challenge to us so far; upon that, people turned to reduce the estimation of shape refer to a conception which surveys major likelihood of \emph{average} depth if the BST made up with a random sequence. In \cite{5} reported by the empirical of many retrials through, the average depth can into $2\ln n$ in most cases, which approximate to $1.39\log{n}$.

In substance, the conclusion Robert Sedgewick et al made is enough preferable to solve many estimations applied on algorithms executing on BST, at that in the book authors yet conceded those existing methods in a poor performance at worst case. In practice the tractability about manipulating a BST sometimes becomes vulnerable in some perhaps, at least to render the performance with instability.

\subsection{Results}(1) Develop the BST into a synthesis by integrating distinct structures and to survey those novel correspondences among them. (2) Discuss the operations involved so as to exert the advantages that have embedded in those new components and make them support mutually to raise whole performance. (3) Completely solve the issue of building a proper tree when with a stochastic sequence as input and guarantee the cost of building invested by a \emph{logarithm} time. (4) Estimate the batch works on BST than the traditional.

\subsection{Perliminated}
Of a tree, it must refer to a set $T=t_1,t_2,\ldots,t_n$ to denote all items in that tree where letter $n$ is the amount of items. We denote a depth by $\mathcal{D}.t_i$ for $t_i\in{T}$. In default, all keys in set $T$ are distinct one another unless other claim made, denote item and key with $\mathcal{K}.t_i $.

Excepted the \emph{terminal} in a tree without any child, the others called \emph{knot}, furthermore the \emph{Complete Knot} (CK) has two children in charge; another, the \emph{Partial Knot} (PK) with only one the left or the right.\\

\textbf{Contents Arrangement.} The 2nd section will prove some associated corollaries and introduce the new synthesis. The 3rd expands the essential methods on that structure. The 4th will get estimate the batch work on a representative pattern. The summary is arranged in last section.

\section{Morphology}
For item set $T$ on a BST, there may be a \emph{reference} (or \emph{ref}) set and further it conducts these two sets in a relationship of member mapping\textendash the pattern of \emph{bijection} so as to access that BST at each item can be done via the interface of \emph{ref} set. Furthermore this \emph{ref} set is permutable ordered by sorting their incident keys. Apparently of so doing, the correspondent permutation of \emph{ref}s should in a strictly ordered with all incident keys in ascent, for this, we can maintain a \emph{ref} set in a queue $(\rho_1,\rho_2,\ldots,\rho_n)$ denoted by $\mathcal{T}$; for two ordinals $0<i\leq j\leq n$ in queue, such that $\mathcal{K}.\rho_i\leq\mathcal{K}.\rho_j$, which is a prerequisite on this queue $\mathcal{T}$. 

We call the item in BST \emph{Preimage} relating to its ref. Moreover on a member $\rho_{s}\in\mathcal{T}$, the reach in $\mathcal{T}$ embraced by $\rho_{1}$ and $\rho_{s-1}$ we call \emph{Left Subqueue} or LS otherwise at right side the \emph{Right Subqueue} or RS by $\rho_{s+1}$ and $\rho_{n}$.

Without question, the queue $\mathcal{T}$, which is the axiomatic, as an invariant to BST\cite{5}; herein we call \emph{Adjoint Reference Axis} or ARA.

By the account of definition of an ARA, we can give a set of features on the relevance between two structures the ARA and the BST: For a pair of contiguous refs $\rho_{s}, \rho_{s+1}\in\mathcal{T}$ and with their preimages $t_{i}, t_{k}\in{T}$ respectively, some corollaries can be given as follow.

\begin{enumerate}
\item $\mathcal{D}.t_{i}(\rho_{s}) \neq\mathcal{D}.t_{k}(\rho_{s+1})$.
\item If $\mathcal{D}.\rho_{s}(t_{i})<\mathcal{D}.\rho_{s+1}(t_{k})$, then $t_{i}$ is an ancestor to $t_{k}$ otherwise inverse.
\item $t_{i}, t_{k}$ concurrently are neither as CKs nor as terminals.
\end{enumerate}
These proofs are as follows. 
\begin{proof}
Contrary to $\mathcal{D}.t_{i}(\rho_{s}) = \mathcal{D}.t_{k}(\rho_{s+1})$; refer to their depths which both from root to themselves, there must be a nonempty intersection on that both; moreover a member $t$ therein may as their common ancestor. Since at least there root of BST in that set, for oneself has been as the chief ancestor to all others, thus the form
\begin{equation}\label{ieq}\mathcal{K}.\rho_{s}(t_{i})<\mathcal{K}.\rho(t)<\mathcal{K}.\rho_{s + 1}(t_{k}).\end{equation}
should hold. That will be a contradiction to prerequisite of two refs $\rho_{s}$ and $\rho_{s+1}$ contiguous in ARA; or else case, items $t_i$ and $t_k$ both are same one. The \emph{first} holds and we call the inequality\eqref{ieq} The First Essential Correlation or TFEC. \\

\textbf{Second}, suppose $\mathcal{D}.t_{i}(\rho_{s}) > \mathcal{D}.t_{k}(\rho_{s+1})$. The \emph{second} could hold if $t_k$ as the root of BST. Assume $t_i$ not in the left subtree that subjects to root $t_k$, but they always have themselves to root a common ancestor $t$ according to the constitution of tree; further the TFEC holds for three $t_{i}, t_k$ and $t$ but it violates the prerequisite on ARA like the \emph{first}; another case that $t_k$ and $t$ be same one. Analogously for $\mathcal{D}.t_{i}(\rho_{s}) < \mathcal{D}.t_{k}(\rho_{s+1})$.\\

\textbf{Third}, (1) A terminal cannot be ancestor to another. (2) Contrary to $t_i$ and $t_k$ both as CKs, further suppose $t_k$ is an ancestor to $t_i$. 

Thus $t_i$ in the \emph{left} subtree that in charge by root $t_k$. Over there at least an item $t$ as child at $t_i$ \emph{right} link such that there is existence of $\mathcal{K}.t_{i}(\rho_{s})<\mathcal{K}.t$ and $\mathcal{K}.t_{i}(\rho_{s}), \mathcal{K}.t<\mathcal{K}.t_{k}(\rho_{s+1})$, then $t$'s ref will among $\rho_{s}$ and $\rho_{s+1}$ to lead to a contradiction. The \emph{third} holds.

\end{proof}

In 1979, J. H. Morris had invented a similar linear sequence made in $O(n\log{n})$ the lower boundary, he aimed to traverse a BST in a convenience\cite{4}. Robert Sedgewick et al even had projected the all items in a BST instance on a horizon in their book\cite{5} to render such a queue and sounded it is an invariant to a BST, moreover to detail something for that conception.

We here study the morphology of this synthesis that composed of two structures which thicker than theirs. The goal we longing is to attempt to find out some delicate features concealed inside unknown before, by which we can develop more quick algorithms than the existing and make two structures can support one another to reach a preferable execution upon manage data structure.

\section{Operations} 
\subsection{Deletion} 
The existing method of deleting an item off a BST is fairly in perplexity to people. The central issue is the deletion concerns the perhaps of damage to logical structure. In fact, the focus about this problem is on deleting a CK off, upon that it needs an alternate in the tree to charge the vacant position made by deleting; the selected condition at alternate clearly requires that one with the capable to hold on those trees again which ever rooted in that CK before. 

T. Hibbard in 1962 proposed the \emph{successor} as alternate\cite{5} which at the leftmost position in the \emph{right} subtree rooting CK which needs a progress of examination like $(>, <,\ldots,<)$ for seeking. Analogously to $(<, >,\ldots,>)$, Robert Sedgewick et al suggested the \emph{predecessor} in \emph{left} subtree the rightmost\cite{5}.

However for both routes, the cost at least is involved to $\log{n}$; In ARA model, we have an analysis about this selection as follows.

Deleting a \emph{terminal}, that may in $O(1)$ without overplus from other actions since no child in charge by terminal. For a PK with unique child, its child actually can as a root to hold on a subtree or null. However, the child can as the alternate for charge the vacant position that the deleted PK left down. At the worst case, the process will be \emph{deleting} one and \emph{moving} one, so the cost can in the $O(2)$ at the worst case. 

Therefore, of deleting a CK, if a terminal as the alternate, then equivalently to delete two and move one, the cost in $O(3)$ time; instead, for a PK as the alternate, the cost will become deleting two and move two. Eventually, the cost may in $O(4)$ at the worst case for deleting a CK.

Now in ARA model, we advise the two members in ARA with their preimages as alternates which by that \emph{CK}, either at the \emph{right} side or at the \emph{left} side, both contiguous to \emph{CK}; actually they are the \emph{successor} and \emph{predecessor} in tree to that CK which mentioned by Hibbard or Sedgewick; their opinions both are right.

Our advice may at least involves two grounds: (1) Either of two alternates in ARA, the key on it is the \emph{extremum} in LS (maximum) or RS (minimum) that means it has the qualification as new root to charge that tree which in the past to rooting CK, since the relationship for each key on descendant in that tree to new root likewise consists with that protocol of constitution of BST; the alternative whether as a parent or whether as an ancestor. (2) By our proof, there is not any likelihood for these alternates being CKs concurrently. 

Via the dimension of ARA we readily seek out the alternate so we gain a constant complexity that of $O(4)$ on deleting operation, which far less than logarithm.

\subsection{Insertion}
In contrast to deletion, insertion is more important that relating to building our synthesis that composed of a BST and an ARA; besides these, the insertion yet concerns to the function of offline manage a linear list. For example, the \emph{thread binary tree} invented by J. H. Morris in 1979\cite{4}, that can be referred to the result in executing an \emph{offline} method to obtain that list. When a set of \emph{online} accessing with frequent insertions upon that tree, the \emph{offline} will pay off a high price for a plenty of requirements of resorting.

What will changes happen in ARA when an insertion accomplishes in the BST? That we will survey is the key point that concerns if it in a proper tractability to us. The following proof will describe this evolution between the new item and its parent in BST, and their refs in ARA.

\begin{lemma}
As an item $t$ added into a BST as $t_{i}$'s child, consider their refs $\rho(t), \rho_{s}(t_i)\in\mathcal{T}$, the ref $\rho(t)$ will by $\rho_{s}(t_i)$ at the left side or the right side in ARA. 
\end{lemma}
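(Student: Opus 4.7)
The plan is to reformulate the claim before proving it. Saying that $\rho(t)$ ends up adjacent to $\rho_{s}$ in the ARA is equivalent to saying that no pre-existing item has a key strictly between $\mathcal{K}.t_{i}$ and $\mathcal{K}.t$, because the ARA is maintained in ascending key order. So the task reduces to showing that no such ``interposing'' item exists in $T$ at the moment of insertion. I would split on whether $t$ lands at the left or the right link of $t_{i}$, and, using the left/right symmetry of the BST axiom, carry out one case in detail; the other is parallel.

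I would take the right-child case. Here $\mathcal{K}.t_{i}<\mathcal{K}.t$, and the right link of $t_{i}$ was null immediately before insertion (otherwise $t$ could not have been attached there). If $t_{i}$ already occupies the last position of the ARA, adjacency is immediate. Otherwise let $\rho_{s+1}$ denote the ref following $\rho_{s}$ in the ARA before $t$ is added, with preimage $t_{\mathrm{next}}\in T$; by the prerequisite on $\mathcal{T}$ we have $\mathcal{K}.t_{i}<\mathcal{K}.t_{\mathrm{next}}$. I would then invoke the three corollaries already proved for contiguous refs: by the first, $t_{i}$ and $t_{\mathrm{next}}$ have distinct depths, and by the second one of them is an ancestor of the other.

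The key branch is ruling out that $t_{i}$ is the ancestor. If it were, then since $\mathcal{K}.t_{\mathrm{next}}>\mathcal{K}.t_{i}$, the item $t_{\mathrm{next}}$ would lie in the right subtree rooted at $t_{i}$, which forces the right link of $t_{i}$ to be non-null prior to insertion\textemdash contradicting the fact that $t$ was just attached there. Hence $t_{\mathrm{next}}$ is the ancestor, and $t_{i}$ sits in the left subtree of $t_{\mathrm{next}}$. The newly attached $t$ inherits this location, so $\mathcal{K}.t<\mathcal{K}.t_{\mathrm{next}}$. Combined with $\mathcal{K}.t>\mathcal{K}.t_{i}$, this places $\rho(t)$ into the ARA strictly between $\rho_{s}$ and $\rho_{s+1}$, yielding the claimed adjacency on the right side. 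The left-child case is handled identically, using $\rho_{s-1}$ and exchanging the roles of ``left'' and ``right''.

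The main obstacle I anticipate is purely presentational: unpacking what ``$\rho(t)$ will by $\rho_{s}(t_i)$'' should mean in precise terms and reducing it to a no-interposing-key statement. Once that reformulation is in place, the argument is essentially a one-step contradiction against the null-link hypothesis, powered by corollary 2; the routine BST inequalities do the rest.
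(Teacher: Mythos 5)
Your proof is correct and follows essentially the same route as the paper's: both reduce adjacency in the ARA to the absence of an interposing key, examine the ARA-neighbour of $t_i$ on the relevant side, invoke corollary 2 (contiguous refs force an ancestor relationship) and use the fact that the relevant link of $t_i$ was null before the attachment to eliminate the bad branch. Your rendering is somewhat cleaner \textemdash\ you apply the corollary to the pre-insertion contiguous pair $(t_i, t_{\mathrm{next}})$ and dispose of one direction of ancestry outright, where the paper works with the post-insertion neighbour $t'$ and a three-way case split on $t'$ versus $t_i$ \textemdash\ but the underlying argument is the same.
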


\begin{proof}
If the fresh $t$ as \emph{left} child of its parent $t_{i}$, If $t_{i}$'s LS is empty in ARA, then the $\rho(t)$ will be interpolated by $\rho_{s}(t_i)$ at left side, the lemma holds. 

Instead, assume there is an item $\rho_{s-1}(t')$, the fresh $\rho(t)$ by it at the \emph{left} side in ARA. Since the fresh is a terminal in tree, then by those corollaries above-mentioned there should be an assertion come true: $t'$ is an ancestor to $t$. 

Also, the assertion will further lead to three cases about item $t_{i}$ and $t'$: (1) the $t'$ also is the ancestor to parent $t_i$; (2) or conversely; (3) $t_i$ and $t'$ are the same one. 

If the \emph{first} case holds, $t_{i}$ should have stayed in the \emph{right} subtree that in charge by root $t'$ for $\mathcal{T}.t_{i} >\mathcal{T}t'$, then it is clearly a contradiction the fresh impossible as the \emph{left} child of its parent since its parent in right subtree. Inversely for (2), then someone has been occupied the left link of $t_{i}$, maybe $t'$. 

Finally the \emph{third} case is truth\textendash $\rho(t)$ by $\rho_{s}(t_i)$ at left side. Analogously to fresh $t$ as $t_i$'s right child, the lemma holds.

\end{proof}

This lemma presents a clear correspondence that a sorted system with two dimensions the BST and the ARA. That not only makes our two building works concur on a routine to easy\textendash an adding operation in BST also being an insertion on ARA, both are fresh by its parent; on the other hand, this lemma has exposed the affinity of two structures: the BST can carry the information over ARA. In next subsection, we will exploit this feature. 

Herein, we lay the Doubly Linked List (DLL) on ARA as data structure to condition the dynamical insertion that may be caused from the frequently online. To the new data structure $T\oplus\mathcal{T}$, we call \emph{Compound Binary Search Tree} or CBST.

\subsection{A Simple Query}
Although ARA is a strictly sorted list by keys in ascent, yet there instantly appears to a challenge while an accessing on ARA merely with two ordinals, which try to obtain a piece-wise data like on a common sorted list whose items numbered by natural number; on account of such compact data structure and its kinds always with a fat chance in hashing or none the preferable to go.

At the aspect of maintaining a numbering system on an ARA, it never is none the easy: the incidence of point-wise renumbering in list which brought about by someone's change maybe reaches all corners through the whole; especially worse still for online algorithms than you imagine. For this query, we can convert the operation of location with ordinal to calculate the position in ARA. The new model will supply the maintainability in logarithm complexity.

Of an insertion in ARA meanwhile as being adding item into BST, we learn two means to extend an ARA along axis: one is adding member at the left or right end of a present ARA; another, the fresh one interpolated between two. So we define a structure in BST.

\textbf{Definition. }\emph{Given an item family composed of a \emph{grandpa}, a \emph{father} and a \emph{grandson}, we call the \emph{father} \emph{Flexed Node} (or FN), within their familial relationship, if and only if two links the \emph{grandpa}\textendash \emph{father} and \emph{father}\textendash \emph{ grandson}, both in distinct sides.}

For an instance: father at grandpa's \emph{left} link contrasted with grandson at father's \emph{right} link which to \textquotedblleft\emph{replicate}\textquotedblright~the path at father; the two links can comprise a \emph{Flexed Pipe} or FP; we call the left-right pattern of FP \emph{Clockwise} inversely $Anticlockwise$.

Thus, we can observe an interest process: \textquotedblleft Suppose an ordinal for ARA be known on \emph{grandpa}, there is a visit occurs repetitively along such bearing that at parent bound for child. If this progress reflected on ARA, it can render the ordinal in \emph{cumulation} or \emph{degression} upon that the visiting in forward or retreat among those refs which are two distinct bearings. For this \emph{one-step} the leaping does over those members in ARA, the ordinal’s change merely for correcting into one, we can reckon it as an \emph{invariance} addressing the fixed one on numbering ordinal. Instead, on a FN, the \emph{grandson}'s ref inserted among \emph{grandpa}'s and \emph{father}'s in ARA, the counting on ordinal should have to re-treat the one-point the \emph{grandson} into one. At this time, the reckoning on ordinal must be yielded to the changes on both the tree and the queue, because once a tree rooted in the \emph{grandson}, the treatment will be done with the \emph{variant} maybe many that relevant to the scalar of that tree other than singleton of invariance.\textquotedblright

Hence, we measure the case on that replication with variable \emph{Flexion} the number of items in that subtree; denoted by $\mathcal{F}$. Of that, a progressive visit along the clue of \emph{parent-child} in a tree actually evolves a leaping over an interval on that axis of ARA; the flexion can measure the thick of interval which embraced by \emph{grandpa} and \emph{father}.

We hence design the structure and functor in following, at first we suppose a visiting list $\phi=\varv_1, \varv_2,\ldots$ consisting of items that will be visited by functor $\varPhi$ and, the calculation is in the queue with ascending ordinal.\\

\textbf{The Structure}
\begin{enumerate}
\item \emph{For the \emph{root} of BST, let flexion $\mathcal{F}.\text{\emph{root}}$ equal of the number of items in its left subtree.}
\item \emph{The flexion on clockwise FN is of \emph{negative}; \emph{positive} for anticlockwise; none of the two into 0.}
\item \emph{If item at the left link as child, the fixed \emph{counting} for oneseft into $-1$, otherwise $1$; herein denoted by $\mathcal{S}$, especially $\mathcal{S}.\text{\emph{root}} = 1$.}
\end{enumerate}

The functor $\varPhi$ with visiting list $\phi$ will start at \emph{root} and assume the input ordinal is $N$:
\begin{align*}
\varPhi(\varv_{i+1}&) =\varPhi(\varv_{i}) + \mathcal{F}.\varv_{i+1}+\mathcal{S}.\varv_{i + 1};\\
\text{\textbf{s.t.} }&\left\{
\begin{array}{ll}
\varv_{1} = \text{root};~\varPhi(\varv_{0}) = 0 &\text{if}~i = 0;\\
\text{return}~\varv_{i}&\text{if}~\varPhi(\varv_{i}) = N;\\
\varv_{i+1} = \varv_{i}.\text{L\emph{child}}&\text{if}~\varPhi(\varv_{i}) > N;\\
\varv_{i+1} = \varv_{i}.\text{R\emph{child}}&\text{if}~\varPhi(\varv_{i}) < N;
\end{array}\right.
\end{align*}

The search of route on functor $\varPhi$ is likewise drawn from root to someone inside a tree step-by-step without distinct difference to an ordinary query. Therefore the lower bound of cost definitely involves to $\log{n}$. In addition to in ARA model, those intervals processed by functor $\varPhi$ in ARA incessantly shrinks over time, in this way the numeric can approach to the exact in the range as thin as possible till reach. 

The same course can be inversed to follow the clue of \emph{child-parent} applied on deletion certainly. 

Because in need of the function of ordinal query on CBST, these flexions incident FNs on that route that involving to the item has been deleted must be corrected with 1 or -1. Hereby the cost on deleting operation becomes involved to $\log{n}$ rather than ours above-mentioned.

We have introduced all principal operations in a CBST with FN model. Which these operations on each item, whether doing an \emph{insertion} or whether doing a \emph{query} or whether doing a \emph{deletion}, this model can always conduct them to obtain the ordinal involved in ARA simultaneously. 

On the other hand, it also maintains in a logarithm system a quick\textendash algorithm set. Meanwhile, we solve a challenge for a compact linear list with a well responsiveness in logarithm times, however on implementation or on maintaining. 

\subsection{Building}
We have got rid of the influence out from shape of BST in deletion, but the true of matter is not likelihood there for us to do some analogue things for other operations. Hence in this section we will discuss how to build a \emph{proper} BST from the dimension of ARA. That conducts building a CBST is not concurrently to construct two structures other than the insertion above-mentioned instead to fabricate it, of a rather manner of industrialization.

As building a Pyramid as following pseudo code show over there assume $n=2^{k}-1$, algorithm recursively extracts items from ARA as parents to connect with their children that have been reserved down. 

\begin{center}
\small{\textbf{Construct BST}}
\end{center}

\begin{flushleft}
/$\ast$~~\emph{Parameters}~~$\ast$/\\ $\eth = \xi= 2;$ ~\emph{//~the cursors, $\eth$ backup $\xi$}\\$\theta = 4;$~~\emph{//~offset for next extracted one} \\$\kappa = 1;$~~\emph{//~ the width between parent $\&$ children} \\
~\\
/$\ast$ $\ast$ $\ast$~~\emph{Building Module}~~$\ast$ $\ast$ $\ast$/\\
\small{
01. Loop($\xi<n$)\\
02. ~~$dl = \xi - \kappa; ~~ dr=\xi + \kappa;$~~\emph{// addresses for left $\&$ right}\\
03. ~~$\mathcal{T}(\xi).\text{L}link \coloneqq  \mathcal{T}(dl);~~\mathcal{T}(dl).Parent \coloneqq \mathcal{T}(\xi);$\\
04. ~~$\mathcal{T}(\xi).\text{R}link \coloneqq \mathcal{T}(dr);~~\mathcal{T}(dr).Parent \coloneqq \mathcal{T}(\xi);$ \\  
05. ~~$\xi\coloneqq \xi + \theta;$\\
06. ~~\textbf{if} $\xi> n$ and $\eta < n/2$ \\
07. ~~\textbf{then} $\kappa =\eth;~\eth=\xi= \theta; ~\theta\coloneqq 2\theta;$ \emph{// start next round}\\ 
08. \textbf{return} the tree $T$.
}
\end{flushleft}

In this algorithm, the process can swift convert the roles for items from parent to child. In every round, algorithm is equivalently to execute this converting module on an abstracted bed against the previous results. In this pattern, the primitive bed is ARA and, only the items on \emph{event}\textendash position in ARA can be permitted to participate.

For example, in the \emph{first} round, initially these items at 2\emph{nd}, 6\emph{th}, 10\emph{th}, $\ldots, (n - 1)th$ positions are involved in a new list as parents, where with \emph{offset} the argument equals of 4 to pick up \emph{parents}; on the other hand, with \emph{width} the argument equals of 1, the $odd$s are picked up as \emph{children}. 

Against the new sequence of parents, the \emph{second} round will do the similar performance which chooses the 4\emph{th}, 12\emph{th}, 20\emph{th}, $\ldots, (n - 3)th$ in ARA to as parents by \emph{offset} counted of 8, upon that those parents in \emph{first} round, 2\emph{nd}, 6\emph{th}, $\ldots$, now become children pointing to new parents, in which the \emph{width} argument equals of 2 always half of \emph{offset} argument.

In this way, the course will terminate when the \emph{middle} item at the $2^{k-1}$\emph{th} cups the rising \emph{pyramid} as chief root to whole, which inverses the course of existing building method that begins at root of BST.

As to ARA as a sorted linear list, the parameter $\kappa$ at 02\emph{th} step can utilize this feature to control the selection of children\textendash left element can but as left child, right child as well as left one; of that, this measure guarantees those keys comply the regulation about parent and children.

It wants us to answer a question about the detail of this kind of fabrication: There is any foul about each key in parent-child pattern, but not equivalent to in \emph{ancestor-descendent} one. For example, one key in \emph{right} subtree but smaller than the root's by which that subtree in charge. The cause is the foul member at that position in ARA ahead of the root but selected as child by a lower descendent in that subtree. We so far cannot confirm if our tactic of picking children is reliable to avoid this foul, specifically on using parameter $\kappa$.  In\cite{5}, Robert Sedgewick et al ever mentioned this kind of error.

The proof is simply to prove the extremum in a tree, its ref impossibly outs the cordon the position at which the ancestor stays. Hereby, we will only discuss the case\textendash ancestor versus its right subtree, thus it requires us prove the \emph{leftmost} item in subtree that ref’s position in ARA always at its ancestor’s right side.

Given a $j$th round (for $j>1$), and $t_{j}$ as parent in this round, so it with a right width $d_{R}(j) = 2^{j-1}$ for $t_{j}$ to pick up right child. Suppose a tree is rooted in $t_{j}$’s right child $t$ and, thus there is a \emph{path} containing $j-2$ links to catenate the left extremum and the root that right child $t$.

To measure the parameter $\kappa$'s change on every link in that path whose scalar can be quantified with the number of members that leaped over for picking child, which can be described by a series $\bigcup_{i=0}^{i\leq j-2}2^{i}$; among them each numeric has been applied as \emph{width} in every round to capture children before $j$th.

Well, a total number $d_{L}(j-1) = \sum_{i=0}^{i\leq j-2}2^{i}$ results in an interval the left extremum off the root the $t$. By binary addition, it is easy to compare $d_{L}(j-1)$ and $d_{R}(j)$ and results in $d_{R}(j) - d_{L}(j-1) =1$ deduced from the exponent on $d_{R}(j)$ more than $d_{L}(j-1)$ for one degree. That means the foul is impossible in PM. 

Meanwhile the proof does matter in another critical quality in PM that algorithm cannot pick up any item as parent or child repeatedly; because in very round, the candidates for in roots always outside those present trees that have been built up.

Of course, there is most likelihood to $n\neq 2^k - 1$ in practice; herein we can read it as $n=m+m'$ for $m=2^k-1$ and $n\slash{2}\leq {m}\leq{m}$; thus the \textquotedblleft\emph{overplus}\textquotedblright~$m'$ smaller than $m$. It is clear that these overplus ones can be settled at the most bottom as children before proceeding PM. The shape only just depends on user's tactic to arrange the positions for them maybe for balance. Anyway, the depth of CBST can be $\lfloor \log{n}\rfloor$ at all.

It is certainly that the PM works on a sorted sequence. Hereby we will introduce a sorting algorithm which developed from \emph{Tournament Method} that has been introduced in\cite{6} whose complexity has been known in $O(n\log n)$. We reformed it for condition the data structure the DLL, called \emph{Card Game Sorting Method} or CGSM\footnote{The source code and files involving to test this algorithm has been hosted in this website: \emph{https://github.com/snatchagiant/CGSM} which encoded by C++ and executed in console platform.}.

The pseudo code CGSM (1) about the \emph{engine} is in following.

\begin{center}
\small{\textbf{CGSM (1)}}
\end{center}

\begin{flushleft}
\small{Function: Insert$(s, ~t)$ \emph{// inserting in DLL, s precedes t}}\\
\small{Function: Follow$(s, ~t)$ \emph{// t follows s in DLL}}\\
~\\
/$\ast$ $\ast$ $\ast$ \small{Merger Method} $\ast$$\ast$ $\ast$ /\\
~\\
\small{Function: Merger($x,~ y$) \emph{// $x\in{X}; y\in{Y}$; the heads of queues} \\
%\quad~~/$\ast$  \emph{elect the new head H for new sequnce} $\ast$ /\\
01. $H = x;$ \textbf{if} $\mathcal{K}.x > \mathcal{K}.y$ \textbf{then} $H = y;$~~\emph{// elect the new head}\\
~\\
02. \textbf{Loop} ($y \neq\varnothing$)~~\emph{// not out the range of Y}\\
03. ~~\textbf{if} $\mathcal{K}.x < \mathcal{K}.y$ \\
04. ~~\quad\textbf{if} $x$ at the end of $X$\\
05. ~~\quad\textbf{then} Follow$(x, ~y)$; \textbf{break};~\emph{// follow x, the y $\&$ RS}\\
06. ~~\quad\textbf{else} $x\coloneqq x.\text{Later};$~~\emph{// continue on X.}\\
07. ~~\textbf{else} $\text{Insert}(y, ~x);$ $y\coloneqq y.\text{Later}$~~\emph{//insert $\&$ continue on Y}\\
08. \textbf{return} $H$;\\ 
}
\end{flushleft}

The sequence $Y$ can as well as sorted heap in ascent where the member that with the min key among all always springs out from the top of heap; for the outside member $y_{j}\in Y$, Merger Method attempts to seek out an appropriate interval in sequence $X$ for inserting it by moving the cursor in sequence $X$, whose process as sorting cards in card game. 

Which maintains the kernel logic, by a pair of two neighbors the $x_{i-1}, x_{i}\in{X}$ and member $y_{j}\in{Y}$ commonly carry the inequality $\mathcal{K}.x_{i-1}<\mathcal{K}.y_{j}<\mathcal{K}.x_{i}$  about three keys of theirs.

If functor to the end of $X$ and $Y$ no empty, then the rest subsequence in $Y$ would join to the right end of new $X$ together to compose the new sequence $X\biguplus{Y}$; than whole course end. Conversely none springing out $Y$ for search either leads to procedure terminate likewise.

\emph{Complexity. }Let $\vert X\vert = t$ and $\vert Y\vert =s$, we can describe the process of comparisons by a set $Y'$ as:
\[Y'=\bigcup^{s}_{i = 1}y_i\times{X_{i}}\colon{X_{i}}=x_k,\ldots, x_j\text{ for }{1\leq k, j\leq t}.\]
A Cartesian product in favor of present the detail on each member $y_i\in{Y}$. If a member $y_i$ inserted in sequence $X$ and ahead of member $x_{j}\in{X}$, than for shift to next member $y_{i+1}$, the proceeding will start at $x_{j}$. Thus there is a nonempty intersection $\vert{X_{i}}\cap{X_{i+1}}\vert=1$, of that at worst case it is equivalent to $(s+t)$ times of comparisons implemented on this algorithm which tallies the sum of scalars of two sequences the $X$ and the $Y$; therefore the complexity can in $O(n)$ where $n= \vert X\vert +\vert Y\vert$; likewise, each member in that two also may be consider as equivalently being invoked for precisely once.

To a random sequence with $n$th members, certainly, by a way to scan the sequence through, it is easy on the level of procedure to yield a group of components within the sequence where each with a sorted subqueue. The sorting job eventually becomes a multi sequences merging. 
The pseudo code CGSM (2) in following, we only relate another means that has the treatment on each member in sequence as a singleton set in order to exhibit the course of merging multitude of components based on the pairwise.

\begin{center}
\small{\textbf{CGSM (2)}}
\end{center}
\begin{flushleft}
/$\ast$~~\emph{Parameters}~~$\ast$/\\
\small{
$\Theta$; $\kappa= \vert \Theta\vert$;~~\emph{// store the heads of subsequences}.\\
$s = 1; ~ \pi = 0;$ \emph{// two cursors in $\Theta$}\\
~\\
/$\ast$~~\emph{Merger Rounds}~~$\ast$/\\
\textbf{Loop} ($\kappa > 1$)\\
01. $\pi++; \Theta[\pi] \coloneqq \text{Merger}(\Theta[s],~ \Theta[s + 1])$; $s\coloneqq s + 2$\\
02. \textbf{if} $s = \kappa$ \textbf{then} $\Theta[\pi] \coloneqq \Theta[s]$;~~\emph{// backup the last if $\kappa$ is odd}\\
03. \textbf{if} $s > \kappa$ \textbf{then} $\kappa = \pi; ~s=1; \pi = 0;$~~\emph{// start the next round}\\
}
\end{flushleft}

It is easy to count of $\lfloor\log{n}\rfloor$ rounds for whole merger and, moreover by the analysis above-mentioned, the cost can in $O(n\log{n})$.

Of course in this way,  many CBSTs merger can be looked like a process of CGSM, the cost for BSTs merger can be in $O(n\log\kappa+n)$ where variable $\kappa$ is the number of trees. This means let we gets rid of much more annoying troubles that results from a mass of relations intertwined by tree's shape.

In addition to about concurrently building a FN system in PM, we give a solution for a module in algorithm which in following.

\begin{flushleft}
/$\ast$~~\emph{Structure}~~$\ast$/\\
\small{$t\in{T};~t.\ell = 0;~t.r = 0;$~~\emph{// two counters for left $\&$ right.}}\\
~\\
 /$\ast$~~\emph{Module}~~$\ast$/\\
 \small{
 01. \textbf{As} \emph{Parent} \textbf{then} \emph{record the number of items in subtree.}\\
 02. $t.\ell = \lambda.\ell +\lambda.r + 1$;\quad\emph{// $\lambda$ is the left child.}\\
 03. $t.r = \rho.\ell + \rho.r + 1$;\quad\emph{// $\rho$ is the right child.}\\
 04. \textbf{As} \emph{Child} ~~\textbf{if} $\mathcal{K}.t< \mathcal{K}.\pi$\quad\emph{// $t$ at the parent $\pi$ left link}\\
 05. \textbf{then} $\mathcal{F}.t = (-1)\ast{\varv.r}; ~~\mathcal{S}.t  = -1;$\quad\emph{// clockwise}\\
 06. \textbf{otherwise} $\mathcal{F}.t = \varv.\ell; ~~\mathcal{S}.t  = 1;$~~\quad\emph{// anticlockwise.}\\
 }
\end{flushleft}

There is another alternative, the building course may begin at the \emph{median} position in ARA in manner of \emph{top to bottom} completely, by recursively bisection sequence to work out the BST. Here we don't intend to have a length to introduce, over there it may occasion the tree depth in $\lceil\log{n}\rceil + 1$.

\section{Bacth of Works}
It is apparently that we have reduced more trivial and unnecessary steps in our Merger method contrasted with \emph{merge sort} introduced in \cite{5}. Hence we naturally propose a theme the batch of works on CBST since the ARA also as another dimension supporting BST where they might in equivalence. Thus we can look the algorithm on two sequences in the manner of \emph{many-many} other than the traditional method which everyone in a fixed sequence in turn invoked for accessing on the whole BST, the character of \emph{one-many}. 

Of course, in some occasion, such as one-time locking database for a bunch of jobs may spare much more resources than many-time ones. As a theoretical discussion, we merely reduce the issue simply to the preferable or not by the way of complexity analysis which is applicability. So we aim that: (1) Mark off the boundary for two methods, the \emph{batch} and the \emph{traditional} if instance in proper shape. (2) What is the index in a BST? By this guide we learn which alternative is more preferable with tree being inharmonic. (3) Analyze the instance of locality of accessing.

Here we only discuss the case of \emph{query} inasmuch with others they are similarly one another. We firstly let sequence $X=x_{1},\ldots,x_{n}$ as an ARA; otherwise, refer sequence $Y=y_{1},\ldots,y_{\kappa}$ to as the query sequence. 

Than we add a module in Merger method and have a bit of reforming. The routine will process a \emph{success hit}\cite{5} that referred to a query in BST with a ref, the ref succeeds to match the item inside that tree; now here a member in sequence $Y$ instead of that ref, it matches a member in set $X$ as well as $\mathcal{K}.x_{i}=\mathcal{K}.y_{j}$. Functor will return \emph{yes} to sound the success hit.

Conversely, with no match and concurrently $\mathcal{K}.x_{i}>\mathcal{K}.y_{j}$ come true, it states the member $y_{j}$ \emph{failure} for match; no hit happen. These additions have barely to increase the overall complexity nevertheless have added a conditional for execution.

Let $\kappa = \lambda{n}$ (for $0<\lambda\leq{1}$). In the case of sequence $Y$ been sorted in ascent, the cost of the \emph{batch} could be in $(1+\lambda)n$ approach to $O(2n)$. 

Contrast to the \emph{traditional} in the lower bound $\lambda{n}\log{n}$ with members in set $Y$ in turn for query on BST; if $\lambda{n}\log{n}\geq{(1+\lambda)n}$, we have \emph{boundary} $\lambda\geq\log^{-1}{n\slash{2}}$ such that \[\lambda=\rho^{-1} \text{ for } \rho=\lceil\log{n}\rceil.\] If the scalar of set $Y$ beyond, the batch is worthwhile.

Consider plus a sorting on sequence $Y$ in $\kappa\log\kappa$, as the measure with a delicate difference to $\kappa\log{n}$ the traditional, our task hence has to be altered to estimate the shape of BST. 

Assume the cost is $\lambda{n}\hbar$ of using traditional query in CBST where $\hbar$ is the depth of tree. Thus an inequality
\[\lambda{n}\hbar\geq(1+\lambda){n} + \lambda{n}\log \lambda{n}= n+\lambda{n}\log{2}\lambda{n}.\] 
Eventually $\hbar= \lambda^{-1}+\log2\lambda{n}$. Consider $\varg=\hbar - \log{n}$ then $\varg = \lambda^{-1}+\log{2}\lambda{n} -\log{n}$.

Such that $\varg=1 + \lambda^{-1} - \log\lambda^{-1}$. For $0<\lambda\leq 1$, having $\log\lambda^{-1}\lll\lambda^{-1}$ so $\varg > 0$, means the existence of the depth $\hbar$ in a CBST can as an index to measure that tree. For example, $\lambda = 1/4$ then $\hbar\geq{4+\log{(2n\slash{4})}}$, further if $\hbar\geq\log{8n}$, we can execute the batch. 

It is interesting that if two extremums with extreme keys in sequence $Y$  in use to reduce a subqueue in ARA for query, where is boundary? 

At first, the variable $\theta{n}$ ($0\leq\theta\leq{1}$) represents the scalar of that queue locked up in ARA, then the following equation marks out the boundary which has involved the cost of sorting on sequence $Y$
\[\lambda{n}\log{n} - \lambda{n}\log\lambda{n} = (\theta + \lambda)n.\] 
Where the cost for the queries that with two extreme keys for lock a field inside a proper CBST, it can be considered to be ignored as negligible quantity. 

Thus we have the boundary $\theta = \lambda\log(2\lambda)^{-1}$. When $\lambda = 1$ the $\theta$ into \emph{null}, the batch none the worthy; when $\lambda = 1\slash{2}$ the $\theta$ into 0 that means the traditional still worthwhile; if $\lambda\leq{1\slash{4}}$, then $\theta$ always larger than or equal of $\lambda$ by a constant-fold. The boundary indeed is rather volatile.

\section{Summary}
It is interest to develop a common data structure BST into a synthesis. In fact, more and more features covert in the CBST model needs us to reveal, such as maintain a proper BST at a lower cost or further reform the structure for special purpose for clients. On the other hand, we have improved the situation of poor performance at worst case on BST better than before to serve database management.

\end{document}